\newcommand{\conv}{\mathrm{conv}}
\newcommand{\vol}{\mathrm{vol}}
\newcommand\blfootnote[1]{%
  \begingroup
  \renewcommand\thefootnote{}\footnotetext{#1}%
  \addtocounter{footnote}{-1}%
  \endgroup
}
\title{Chirotopes of Random Points in Space are Realizable on a Small Integer Grid}
\author{Jean Cardinal\thanks{Universit\'e libre de Bruxelles (ULB), Brussels, Belgium. {\tt jcardin@ulb.ac.be}}
\and Ruy Fabila-Monroy\thanks{CINVESTAV, CDMX, Mexico. Partially supported by CONACyT grant 253261 {\tt ruyfabila@math.cinvestav.mx, cmhidalgo@math.cinvestav.mx} } 
\and Carlos Hidalgo-Toscano\footnotemark[2]}
\begin{document}
\maketitle
\blfootnote{\hspace{-0.5cm}\begin{minipage}[l]{0.1\textwidth} \includegraphics[trim=10cm 6cm 10cm 5cm,clip,scale=0.15]{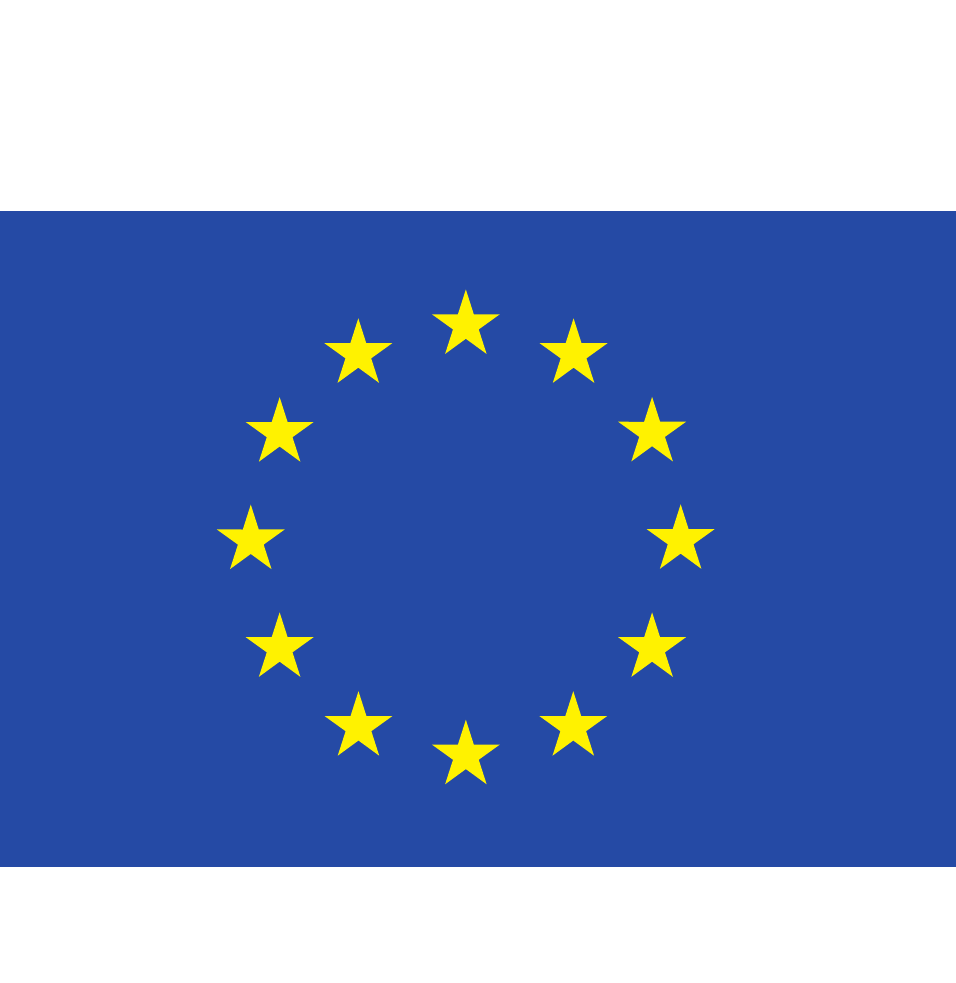} \end{minipage}  \hspace{-0.35cm} \begin{minipage}[l][1cm]{0.38\textwidth}
 	  This project has received funding from the European Union's Horizon 2020 research and innovation programme under the Marie Sk\l{}odowska-Curie grant agreement No 734922.
 	\end{minipage}}

\sloppy
\begin{abstract}
We prove that with high probability, a uniform sample of $n$ points in a convex domain in $\mathbb{R}^d$
can be rounded to points on a grid of step size proportional to $1/n^{d+1+\epsilon}$ without changing the underlying chirotope (oriented matroid). 
Therefore, chirotopes of random point sets can be encoded with $O(n\log n)$ bits.
This is in stark contrast to the worst case, where the grid may be forced to have step size 
$1/2^{2^{\Omega(n)}}$ even for $d=2$. 

This result is a high-dimensional generalization of previous results on order types of random planar point sets due
to Fabila-Monroy and Huemer (2017) and Devillers, Duchon, Glisse, and Goaoc (2018).
% 
% We also give analogous results for preserving the convex hull and the Delaunay triangulation of a uniform
% sample of $n$ points in $\mathbb{R}^d$, and prove that the corresponding bounds on the grid size are $1/n^2$ and $1/n^{d+1+\epsilon}$, respectively.
\end{abstract} 

\section{Introduction}

\paragraph{Chirotopes, order types, and oriented matroids.}

Many interesting properties of planar point sets in general position are captured by the combinatorial abstraction consisting of
the orientation -- clockwise or counterclockwise -- of every triple of points.
This generalizes naturally to $d$-dimensional point sets.
Consider a set $S\subset\mathbb{R}^d$ of $n$ points in general position (no $d+1$ on a hyperplane).
Let us denote by $\Lambda (S,k)$ the set of all ordered $k$-tuples of distinct points of $S$.
With every ordered $(d+1)$-tuple $P=(p_1,p_2,\ldots ,p_{d+1})\in \Lambda (S,d+1)$ of points we can associate a binary value $\chi (P)$ indicating 
the orientation of the corresponding simplex.
This can be expressed as the sign of a determinant:
\[
\chi (P)=
\mathrm{sgn} \left|
%\left(
\begin{array}{ccccc}
p_{1,1} & p_{1,2} & \ldots & p_{1,d} & 1 \\
p_{2,1} & p_{2,2} & \ldots & p_{2,d} & 1 \\
& & \ddots & & 1 \\
p_{d+1,1} & p_{d+1,2} & \ldots & p_{d+1,d} & 1 \\
\end{array}
%\right)
\right|.
\] 

The map $\chi$ is referred to as the {\em chirotope} of the point set $S$. 
The values of $\chi (P)$ obey the chirotope axioms, in particular the Grassmann-Pl\"ucker relations,
and completely characterize the rank-$(d+1)$ {\em oriented matroid} defined by the point set.
Note that not all maps satisfying the chirotope axioms are chirotopes of point sets. 
The {\em Topological Representation Theorem}, however, ensures that they always have a representation
as a collection of pseudohyperplanes. For $d=2$, such a representation is a {\em pseudoline arrangement}.
We refer the reader to the classical text from Bj{\"o}rner, 
Las Vergnas, Sturmfels, White, and Ziegler~\cite{BLSWZ99} for more background on oriented matroids and chirotopes.

We say that two sets of points have the same {\em order type} whenever there exists a bijection between
them such that the chirotope is preserved.
By extracting the purely combinatorial features of a set of points,  
oriented matroids and order types are useful tools in discrete and computational geometry.
In computational geometry, it is the case for instance that the chirotope of a point set is sufficient 
to compute its convex hull, and therefore provides a simple query-based computation model for this task, 
in a way that is reminiscent to comparison-based sorting. Knuth explored such a model and several
generalizations in his book {\em Axioms and Hulls}~\cite{K92}.
As for discrete geometry, Eppstein's recent book on forbidden configurations~\cite{E18} contains a thorough,
unified treatment of major results in geometry of planar point sets through the lens of monotone properties of chirotopes.
The {\em Order Type Database} from Aichholzer et al. contains all equivalence classes of chirotopes realized by sets of up to 10 points in the plane~\cite{AAK02}.

\paragraph{Algebraic universality and bit complexity.}

It is not clear, however, how much information is contained in a chirotope.
For $d=2$, it is known that a chirotope can be encoded using $O(n^2)$ bits~\cite{GP83,S97,F97,FV11}.
Recently, it has even been shown that chirotopes induced by sets of $n$ points in $\mathbb{R}^d$ could be stored using $O(n^d(\log \log n)^2/\log n)$ bits, in such a way that the orientation of every $(d+1)$-tuple can be recovered in $O(\log n/\log \log n)$ time on a word RAM~\cite{CCILO18}.

These representations, however, are distinct from the natural representation of a set of points by $d$-tuples of coordinates. 
The reason is that such a representation can have exponential bitsize: for every $n$ there exists
a chirotope of a set of $n$ points in the plane, every geometric realization of which requires coordinates that are doubly exponential in $n$~\cite{GPS89}. 
This is only one consequence of a more general phenomenon, known as {\em algebraic universality} of rank-three oriented matroids, and proved by Mn\"ev~\cite{M85,M88} and Richter-Gebert~\cite{RG95}.
In a nutshell, it states that for any semialgebraic set $A$, there exists an oriented matroid whose realization space 
is stably equivalent, in particular homotopic, to $A$. Algebraic universality holds for other discrete 
geometric structures such as unit disk graphs~\cite{MM13,KM12}, 4-dimensional polytopes~\cite{RG96}, 
simplicial polytopes~\cite{AP17}, and $d$-dimensional Delaunay triangulations~\cite{APT15}.

It is likely, however, that this worst-case exponential coordinate bitsize is irrelevant for ``typical'' point sets, 
or for point sets occurring in applications. It is therefore natural to wonder what is the required coordinate 
bitsize for chirotopes of {\em random} point sets.
The model we will assume here is that of a uniform distribution on a fixed compact, convex domain.

\paragraph{Related works.}

The question of random order types has first been tackled by Bokowski, Richter-Gebert, and Schindler~\cite{BRS91}.
They attribute the question of estimating the probability of an order type to Goodman and Pollack,
and investigate it under the assumption of a uniform distribution on the Grassmannian.
They discuss the problem of finding an efficient random generator on the Grassmann manifold from
a generator for the unit interval. They also perform experiments supporting the conjecture that the 
maximum probability is reached by the oriented matroid corresponding to the cyclic polytope. 

Recently, Fabila-Monroy and Huemer~\cite{FH17} proved that with high probability, a uniform sample of points in the plane can be rounded to
a $n^{3+\epsilon} \times n^{3+\epsilon}$ grid without altering its chirotope.
Even more recently, Devillers, Duchon, Glisse, and Goaoc~\cite{DDGG18} investigated the number of bits that need to be read from 
the coordinates of random points to know their order type, and obtain the same result in a slightly more general setting.
They also raise the question of whether uniform samples yield a vanishing fraction of order types. 
The difficulty of sampling order types uniformly was also discussed by Goaoc, Hubard, de Joannis de Verclos, Sereni, and Volec~\cite{GHVSV15}.

Another closely related line of work is that of random alignment and shape distribution of triangles~\cite{KK80,K85}. 
Probabilistic analyses supporting the idea that near-alignment of points occur naturally in random sets
were applied in particular to alignments of quasars~\cite{EG81}, and debunking pseudoscientific claims on
mysterious alignments between archaeological sites in Great Britain~\cite{WB83}.
It is known from these works, for instance, that for a uniform random sample of $n$ points in the unit square, 
the expected number of triples contained in a slab of width $w$ is proportional to $wn^3$. 
Hence unless the width is chosen to be at most proportional to $n^{-3}$, the sample contains near-aligned points, 
whose orientation is likely to be flipped by a rounding procedure.
% 
% Finally, note that the question of how much we can perturb a set of points while retaining its structural feature
% has also been tackled in a deterministic, worst-case setting under the name of {\em structural tolerance}
% by Abellanas, Hurtado, and Ramos~\cite{tolerance}. They give algorithms for computing the tolerance of a set of points
% for the Delaunay triangulation and other geometric graphs. Our results can be seen as giving bounds on the tolerance
% of random point sets.

\paragraph{Our contribution.}

We generalize the result of Fabila-Monroy and Huemer~\cite{FH17} and Devillers et al.~\cite{DDGG18} to $d$-dimensional point sets. 
We prove that in a uniform sample of $n$ points in $\mathbb{R}^d$, points can be rounded to a $n^{d+1+\epsilon}\times \ldots \times n^{d+1+\epsilon}$ grid
without altering their chirotope. 
We believe that the proof is simpler than the previous ones, even in the case $d=2$. 
% 
% In Section~\ref{sec:ch} derive a simple extension of this result to $d$-dimensional simplicial polytopes obtained by taking the convex hull of a set of random points in $\mathbb{R}^d$.
% In this case, we show that a grid of step size $1/n^2$ is sufficient to preserve the convex hull.
% 
% Finally, in Section~\ref{sec:dt} we use the same technique to prove that with high probability, the Delaunay triangulation of a random sample of $n$ points is preserved when points are rounded to a grid of step size $1/n^{\lceil \frac d2\rceil + 1+\epsilon}$.

\section{Chirotopes}

We begin with a simple observation on the structure of cells in an arrangement of $d+1$ hyperplanes in $\mathbb{R}^d$.

\begin{figure}
\begin{center}
\begin{tikzpicture}[scale=1.5]
    \fill[green, path fading=east] (1,0) -- (2,0) -- (2,-1) -- cycle; 
    \fill[green, path fading=north] (0,1) -- (0,2) -- (-1,2) -- cycle; 
    \fill[green, path fading=west, fading angle=45] (0,0) -- (-1,0) -- (0,-1) -- cycle; 
    \path[draw] (0,0) coordinate [label=above right:${\bf 0}$] 
            -- (1,0) coordinate [label=above right:$e_1$]
            -- (0,1) coordinate [label=above right:$e_2$]
            -- cycle;
\end{tikzpicture}

\begin{tikzpicture}[scale=1.5]
    \fill[green, path fading=west] (-1,2) -- (0,1) -- (0,0) -- (-1,0);
    \fill[green, path fading=south] (0,-1) -- (0,0) -- (1,0) -- (2,-1);
    \fill[green, path fading=east,fading angle=45] (0,2) -- (0,1) -- (1,0) -- (2,0);
    \path[draw] (0,0) coordinate [label=below left:${\bf 0}$]
            -- (1,0) coordinate [label=right:$e_1$]
            -- (0,1) coordinate [label=above:$e_2$]
            -- cycle;
\end{tikzpicture}
\end{center}
\caption{\label{fig:regions}The regions defined in Lemma~\ref{lem:obs} for $d=2$.
No line can intersect all the three $R_i$ or all the three $S_i$ regions simultaneously.}
\end{figure}
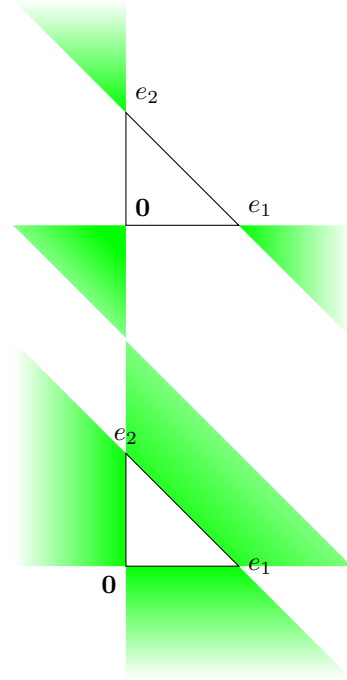

\begin{lemma} \label{lem:obs}
Let $P:=\{p_0,\dots,p_{d}\}$ be a set of $d+1$ points in general position in $\mathbb{R}^d$.
Let $\mathcal{H}$ be the hyperplane arrangement generated by all the hyperplanes 
passing through $d$ points of $P$. Let 
\begin{itemize}
\item $R_1,\dots,R_{d+1}$ be the unbounded
cells of $\mathcal{H}$ that do not contain a facet of $\conv(P)$ in their boundary; and

\item $S_1,\dots,S_{d+1}$ be the unbounded
cells of $\mathcal{H}$ that do  contain a facet of $\conv(P)$ in their boundary.
\end{itemize}
Then there is no hyperplane that simultaneously intersects all the $S_i$ or all the $R_i$.
\end{lemma}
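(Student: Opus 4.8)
The plan is to linearize the whole configuration through barycentric coordinates. Since $P$ is in general position, $\conv(P)$ is a full-dimensional simplex, so I would introduce the barycentric coordinates $\lambda_0,\dots,\lambda_d$ relative to $P$: these are affine functions on $\mathbb{R}^d$ with $\sum_i\lambda_i\equiv 1$, each hyperplane $H_i\in\mathcal{H}$ is exactly $\{\lambda_i=0\}$, and the simplex is $\{\lambda_i\ge 0\ \forall i\}$. The map $x\mapsto(\lambda_0(x),\dots,\lambda_d(x))$ is an affine isomorphism from $\mathbb{R}^d$ onto the hyperplane $A=\{\lambda:\sum_i\lambda_i=1\}\subset\mathbb{R}^{d+1}$, under which the cells of $\mathcal{H}$ become the sign-pattern regions cut out by the coordinate hyperplanes $\{\lambda_i=0\}$ inside $A$. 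A cell then corresponds to a sign vector in $\{+,-\}^{d+1}$, which is realizable precisely when it is not identically $-$ (because $\sum_i\lambda_i=1>0$). The unique bounded cell is the all-$+$ pattern (the simplex); the facet cell $S_i$, whose boundary carries $F_i=\conv(P\setminus\{p_i\})$, is the pattern that is $-$ only in coordinate $i$, while the vertex cell $R_i$, pointing away from $p_i$, is the pattern that is $+$ only in coordinate $i$.

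Next I would reduce the intersection question to a sign test. Any hyperplane $g\subset\mathbb{R}^d$ can be written as $g=\{x:f_c(x)=0\}$ with $f_c:=\sum_i c_i\lambda_i$ for a coefficient vector $c\in\mathbb{R}^{d+1}$ that is not constant (a constant $c$ makes $f_c$ constant on $A$). Because each cell $C$ is open and connected and $f_c$ is affine and non-constant, $g$ meets $C$ if and only if $f_c$ attains both a strictly positive and a strictly negative value on $C$; equivalently, $g$ misses $C$ exactly when $f_c$ is single-signed on the closure $\overline{C}$.

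The heart of the argument is then to read off the sign of $f_c$ on each cell closure from its vertices and recession rays. I would use that $\overline{R_i}$ is the simplicial cone with apex $p_i$ and extreme rays in the directions $p_i-p_j$ ($j\ne i$), whereas $\overline{S_i}$ is the facet $F_i$ extruded along the rays $p_j-p_i$ ($j\ne i$); evaluating the linear form gives $f_c(p_k)=c_k$ at a vertex and slope $c_i-c_j$ along the ray $p_i-p_j$. For the vertex cells this settles the matter at once: letting $i^*$ be an index with $c_{i^*}=\max_k c_k$, if $\max_k c_k\ge 0$ then $f_c$ is nonnegative at the apex $p_{i^*}$ and has nonnegative slope along every ray of $\overline{R_{i^*}}$, so $f_c\ge 0$ on $\overline{R_{i^*}}$ and $g$ misses $R_{i^*}$; and if $\max_k c_k<0$ then all $c_k<0$, and the symmetric choice $c_{i^*}=\min_k c_k$ makes $f_c\le 0$ on $\overline{R_{i^*}}$. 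Hence no single hyperplane can meet all the $R_i$.

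I expect the facet cells to be the main obstacle, precisely because $\overline{S_i}$ has a $(d-1)$-dimensional base $F_i$ on which $f_c$ may already change sign, so the single-value-at-the-apex trick no longer applies verbatim. The plan is to run the same extreme-coordinate bookkeeping while accounting for the base: $f_c$ is single-signed on $\overline{S_i}$ iff the vertex values $\{c_k:k\ne i\}$ share a sign and the recession slopes $c_j-c_i$ are compatible with it, and I would look for an index $i$ at which the extreme coordinate of $c$ forces this to hold. This is where I anticipate the real work lies — tracking how many coordinates of $c$ are positive versus negative and how they interact with the facet's recession directions — and where the structure specific to the arrangement, and any dependence on the dimension $d$, must be used carefully to exhibit the required uncrossed cell.
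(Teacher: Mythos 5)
Your argument for the $R_i$ is complete and correct, and it takes a genuinely different route from the paper: the paper normalizes $P$ to the standard simplex and argues by induction on $d$ (intersecting the arrangement with a coordinate hyperplane), whereas your barycentric reduction handles all dimensions at once --- writing the candidate hyperplane as $\{\sum_k c_k\lambda_k=0\}$ and observing that $f_c\ge c_{i^*}$ on the closed cone $\overline{R_{i^*}}$ for $i^*=\arg\max_k c_k$ (and $f_c\le c_{i_*}$ on $\overline{R_{i_*}}$ for the argmin), so one of the vertex cones is always missed. That half of the lemma is done.

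The $S_i$ half, however, is not a matter of ``more careful bookkeeping'': it is a genuine gap, and your own criterion shows it cannot be closed for $d\ge 3$. By your analysis, $f_c$ is single-signed on $\overline{S_i}$ only if either $c_i=\min_k c_k$ and $c_j\ge 0$ for all $j\ne i$, or $c_i=\max_k c_k$ and $c_j\le 0$ for all $j\ne i$. For $d=2$ one of these always holds (it is decided by the sign of the median of the three coordinates), which matches the paper's figure. But for $d=3$ take $c=(2,-2,1,-1)$: the argmin has a negative sibling and the argmax has a positive sibling, so every $\overline{S_i}$ sees both signs. Concretely, with $p_0=0$ and $p_i=e_i$, the plane $4x_1+x_2+3x_3=2$ contains $(-\tfrac1{10},\tfrac3{10},\tfrac7{10})\in S_1$, $(\tfrac3{10},-\tfrac1{10},\tfrac3{10})\in S_2$, $(\tfrac12,\tfrac3{10},-\tfrac1{10})\in S_3$ and $(\tfrac1{20},\tfrac32,\tfrac1{10})\in S_4$, where the $S_i$ are exactly the regions as defined in the paper's proof; hence one hyperplane meets all four facet cells. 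So the $S$-part of the statement fails for $d\ge 3$, and the paper's terse inductive treatment of that case does not survive this example either. Your instinct that ``the real work lies'' with the $S_i$ was right, but carrying it out produces a counterexample rather than a proof, so the proposal cannot be completed as a proof of the lemma as stated.
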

\begin{proof}
 By doing an affine transformation we may assume that $p_0=0$ and $p_i$ is the vector
 with $1$ in its $i$-th coordinate and $0$ in all other coordinates. Note that the $R_i$
 and $S_i$ are defined by 
 \begin{eqnarray*}
R_i & := & \{x\in\mathbb{R}^d : x_j<0\ \forall j\in [d]\setminus\{i\}\wedge \sum_{j\in [d]}x_j>1 \},\\
R_{d+1} & := & \{x\in\mathbb{R}^d : x_j<0\ \forall j\in [d]\}
\end{eqnarray*}
and 
\begin{eqnarray*}
S_i & := & \{x\in\mathbb{R}^d : \\
& & x_i<0\wedge x_j>0\ \forall j\in [d]\setminus\{i\}\wedge \sum_{j\in [d]}x_j<1\},\\
S_{d+1} & := & \{x\in\mathbb{R}^d : x_j>0\ \forall j\in [d] \wedge \sum_{j\in [d]}x_j>1\}
\end{eqnarray*}
For $d=2$, the observation is direct and illustrated on Figure~\ref{fig:regions}.
For $d>2$, we proceed by induction and suppose that the result holds for $d-1$.
Consider a hyperplane $h$ intersecting the regions $R_1,R_2,\ldots ,R_d$. 
In both cases, if $h$ intersects $R_{d+1}$, then it must intersect $R_{d+1}\cap h'$, where $h'$ is one 
of the hyperplanes of equation $x_j=0$ for $j\in [d]$. 
The intersection of the whole arrangement with $h'$ yields a similar situation in dimension $d-1$, 
for which the statement holds by induction.
Therefore, $h$ cannot intersect $R_{d+1}$. The proof for the $S_i$ is similar.
\end{proof}

We now give a sufficient condition for two order tuples to have the same orientation after a perturbation.

\begin{lemma}
\label{lem:proj}
Consider two ordered $(d+1)$-tuples $P:=(p_1,p_2,\ldots ,p_{d+1})$ and $Q:=(q_1,q_2,\ldots ,q_{d+1})$ of points in $\mathbb{R}^d$. 
For every $1 \le i \le d+1$, let $f_i$ be the hyperplane passing through the facet of $\conv(P)$ opposite to $p_i$;

Suppose that for every $1 \le i \le d+1$ the following two conditions hold.
\begin{itemize}
 \item[1)] $p_i$ and $q_i$ are on the same open halfspace bounded by $f_i$; and
 \item[2)] the distances from $q_i$ to $f_i$ and from $p_i$ to $f_i$  are both larger than the distance
 from $q_j$ to $f_i$, for all $j \neq i$. 
\end{itemize}
Then $P$ and $Q$ have the same orientation.
\end{lemma}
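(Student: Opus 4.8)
The plan is to translate the statement into a single determinant inequality. Express each $q_i$ in barycentric coordinates with respect to the simplex $P$, writing $q_i = \sum_{j} \lambda_{ij} p_j$ with $\sum_j \lambda_{ij} = 1$; then $(q_i,1) = \sum_j \lambda_{ij}(p_j,1)$, so the orientation determinants satisfy $\chi(Q) = \mathrm{sgn}(\det \Lambda)\,\chi(P)$, where $\Lambda = (\lambda_{ij})$. It therefore suffices to prove $\det\Lambda > 0$. I would first read the hypotheses off $\Lambda$. The distance from a point $x$ to $f_i$ equals $|\lambda_i(x)|$ times the (fixed, positive) distance from $p_i$ to $f_i$, where $\lambda_i$ is the $i$-th barycentric coordinate and $\lambda_i(q_j) = \lambda_{ji}$. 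Hence condition~1 says $\lambda_{ii} > 0$, and condition~2 says that in each column $i$ the diagonal entry $\lambda_{ii}$ (and the value $1$, coming from $p_i$) strictly exceeds every other entry $|\lambda_{ji}|$. Together with the affine constraint that the rows of $\Lambda$ sum to $1$, these are the facts I will use.

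To control the sign I would use the straight-line homotopy $q_i(t) = (1-t)p_i + t q_i$, which on the matrix side is $\Lambda(t) = (1-t)I + t\Lambda$. One checks that $\Lambda(t)$ again has positive diagonal, the same strict column dominance, and rows summing to $1$ for all $t\in[0,1]$, and that $\Lambda(0)=I$ with determinant $1 > 0$; moreover the configuration $P$ itself (i.e. $Q=P$) satisfies the hypotheses. Since $\det\Lambda(t)$ is continuous, it then suffices to prove that every configuration satisfying conditions~1 and~2 is non-degenerate, i.e. that $q_1,\dots,q_{d+1}$ are affinely independent: the determinant can never vanish along the path, so its sign is constant and $\det\Lambda = \det\Lambda(1) > 0$.

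The remaining and central task is this non-degeneracy, and this is where I would invoke Lemma~\ref{lem:obs}. Suppose for contradiction that $q_1,\dots,q_{d+1}$ lie on a common hyperplane $h$. Normalizing $P$ as in Lemma~\ref{lem:obs} (so the $f_i$ become the coordinate hyperplanes together with $\sum_j x_j = 1$), condition~2 places each $q_i$ strictly beyond $f_i$, farther than every other $q_j$ and beyond the level of $p_i$. I would use this extremality to exhibit, for each index, a point of $h$ lying in the corresponding unbounded cell, concluding that $h$ meets all of $R_1,\dots,R_{d+1}$ or all of $S_1,\dots,S_{d+1}$, contradicting Lemma~\ref{lem:obs}. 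The main obstacle is precisely this last stabbing claim: turning the distance inequalities of condition~2 into the assertion that a hyperplane through all the $q_i$ must cross every region of one of the two families. This is the delicate part, and the one genuinely requiring the geometry of the arrangement rather than a purely linear-algebraic dominance argument, since entrywise column dominance together with the row-sum constraint is by itself too weak to force non-degeneracy once $d \ge 3$.
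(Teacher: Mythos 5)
Your linear--algebraic packaging is a genuinely different organization from the paper's argument, and the reduction itself is correct: with $q_i=\sum_j\lambda_{ij}p_j$ one has $\chi(Q)=\mathrm{sgn}(\det\Lambda)\,\chi(P)$, the hypotheses translate exactly into ``rows of $\Lambda$ sum to $1$, $\lambda_{ii}>0$, and $|\lambda_{ji}|<\min(\lambda_{ii},1)$ for $j\neq i$'' (note that condition~2 does \emph{not} place $q_i$ ``beyond the level of $p_i$''; it only bounds the other $q_j$ by both $p_i$ and $q_i$), these constraints are preserved along $\Lambda(t)=(1-t)I+t\Lambda$, and so everything reduces to affine independence of any $Q$ satisfying 1) and 2). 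The paper instead introduces hyperplanes $h_i$ \emph{parallel} to the $f_i$ separating $\{p_i,q_i\}$ from the remaining $q_j$, identifies the cells of that translated arrangement with the $R_i$ or $S_i$ of Lemma~\ref{lem:obs}, and argues that the hyperplane spanned by any $d$ of the $q_j$ misses the cell of the omitted index; if you pursue your route you must likewise apply Lemma~\ref{lem:obs} to the translated arrangement, since the $q_i$ need not lie in the unbounded cells of the arrangement of the $f_i$ themselves.

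The step you flag as ``the delicate part'' is, however, not merely delicate: it cannot be completed for $d\ge 3$, and your suspicion that entrywise column dominance plus the row-sum constraint is too weak is exactly right --- moreover the geometric input does not rescue it, because Lemma~\ref{lem:obs} is itself false for $d\ge 3$. Take $d=3$, $p_1=e_1$, $p_2=e_2$, $p_3=e_3$, $p_4=0$, and
\begin{equation*}
\Lambda=\begin{pmatrix}0.6&-0.1&0.25&0.25\\-0.1&0.6&0.25&0.25\\0.25&0.25&0.6&-0.1\\0.25&0.25&-0.1&0.6\end{pmatrix},
\qquad
\begin{aligned}
q_1&=(0.6,-0.1,0.25), & q_2&=(-0.1,0.6,0.25),\\
q_3&=(0.25,0.25,0.6), & q_4&=(0.25,0.25,-0.1).
\end{aligned}
\end{equation*}
Each row sums to $1$, each diagonal entry $0.6$ exceeds the absolute value of every other entry in its column, and every off-diagonal entry has absolute value less than $1$, so conditions 1) and 2) hold for all $i$ (e.g.\ for $i=1$: $\mathrm{dist}(q_1,f_1)=0.6$ and $\mathrm{dist}(p_1,f_1)=1$ both exceed $0.1,0.25,0.25$). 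Yet $(1,1,-1,-1)$ is a null vector of $\Lambda$, i.e.\ $q_1+q_2=q_3+q_4=(0.5,0.5,0.5)$: the four points are coplanar, lying on $\{x_1+x_2=\tfrac12\}$, so $\chi(Q)=0\neq\chi(P)$ and the lemma fails. The same four points witness the failure of Lemma~\ref{lem:obs}: $q_1\in S_2$, $q_2\in S_1$, $q_3\in S_4$, $q_4\in S_3$, so the hyperplane $\{x_1+x_2=\tfrac12\}$ meets all four regions $S_i$, contradicting that lemma (whose inductive step for $d>2$ is where the paper's own argument breaks down). For $d=2$ both your route and the paper's are sound; for $d\ge 3$ the statement needs a stronger hypothesis than 1) and 2), so the gap in your proposal is not one you could have closed.
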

\begin{proof}
By 1) and 2), for every $1 \le i \le d+1$ there exists a hyperplane $h_i$ parallel to $f_i$ that separates $p_i$ and $q_i$
from the other $q_j$ ($j \neq i$). Let $\mathcal{H}$ be the hyperplane arrangement generated by the $h_i$.
Note that for every $1 \le i \le d+1$, $p_i$ and $q_i$ lie in the same cell of $\mathcal{H}$.
Since the hyperplanes $h_i$ are parallel to the facets of $\conv(P)$, $\mathcal{H}$ can be of one of two types, depending on whether the unique bounded
cell of $\mathcal{H}$  has the same or the opposite orientation as $\conv(P)$, see Figure~\ref{fig:proj}.
Let $C_1,\dots,C_{d+1}$ be the cells of $\mathcal{H}$ containing $p_i$ and $q_i$, respectively. Note that
the $C_i$ are either the $R_i$ or the $S_i$ defined in  Lemma~\ref{lem:obs}. In both cases
no hyperplane can intersect all the $C_i$ simultaneously. For every $1 \le i \le d+1$, let  $f_i'$ be the hyperplane
passing through the $q_j$ different from $q_i$. Note, $f_i'$ intersects all the $C_j'$ with $j \neq i$. Thus,
$f_i'$ does not intersect $C_i$, and $p_i$ and $q_i$ are on the same open halfspace defined by $f_i$.
Therefore, $P$ and $Q$ have the same orientation.
\end{proof}

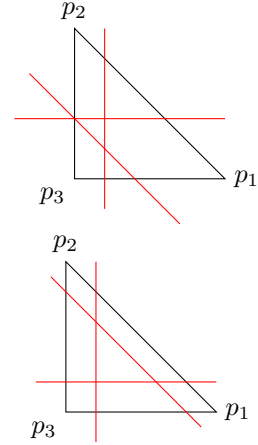
\begin{figure}
\begin{center}
\begin{tikzpicture}[scale=2]
    \path[draw] (0,0) coordinate [label=below left:$p_3$]
            -- (1,0) coordinate [label=right:$p_1$]
            -- (0,1) coordinate [label=above:$p_2$]
            -- cycle;
    \path[draw, color=red] (1-.3,0-.3) -- (0-.3,1-.3);
    \path[draw, color=red] (0-.4,0+.4) -- (1,0+.4);
    \path[draw, color=red] (0+.2,-.2) -- (0+.2,1);
\end{tikzpicture}

\begin{tikzpicture}[scale=2]
    \path[draw] (0,0) coordinate [label=below left:$p_3$]
            -- (1,0) coordinate [label=right:$p_1$]
            -- (0,1) coordinate [label=above:$p_2$]
            -- cycle;
    \path[draw, color=red] (1-.1,0-.1) -- (0-.1,1-.1);
    \path[draw, color=red] (0-.2,0+.2) -- (1,0+.2);
    \path[draw, color=red] (0+.2,-.2) -- (0+.2,1);
\end{tikzpicture}
\end{center}
\caption{\label{fig:proj}The hyperplanes $h_i$ in Lemma~\ref{lem:proj}.}
\end{figure}

We now prove our main result by showing that if $S$ is a random point set and $S'$ is obtained by rounding $S$
on a sufficiently dense grid, then the conditions of the lemma hold for every pair composed of a $d$-simplex in $S$
and its corresponding rounded version in $S'$.

\begin{theorem}\label{thm:main}
Let $S$ be a uniform sample of $n$ points in the $d$-dimensional unit ball.
Then for every $\epsilon > 0$, with probability at least $1-O\left(\frac{1}{n^{\epsilon}}\right )$, the points of $S$ can be rounded to a grid of step size
$1/(n^{d+1+\epsilon})$ without changing their chirotope.
\end{theorem}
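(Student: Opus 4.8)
The plan is to apply Lemma~\ref{lem:proj} to every ordered $(d+1)$-tuple $P=(p_1,\dots,p_{d+1})$ of sample points together with its rounded image $Q=(q_1,\dots,q_{d+1})$, and to show that the hypotheses of that lemma hold simultaneously for all tuples with the claimed probability. Write $g:=1/n^{d+1+\epsilon}$ for the grid step and let $S'$ be obtained by rounding each point of $S$ to its nearest grid point. Rounding displaces every point by at most $\delta:=\tfrac{\sqrt d}{2}\,g=O(g)$ in Euclidean norm (half a cell diagonal), so $|p_k-q_k|\le\delta$ for all $k$. It therefore suffices to express the two conditions of Lemma~\ref{lem:proj} in terms of $\delta$ and the geometry of $P$.

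The next step is to reduce both conditions to a single \emph{fatness} requirement. Fix $i$, let $f_i$ be the hyperplane through the facet of $\conv(P)$ opposite $p_i$, and let $h_i:=\mathrm{dist}(p_i,f_i)$ be the corresponding height of the simplex. The points $p_j$ with $j\neq i$ all lie on $f_i$, so $\mathrm{dist}(q_j,f_i)\le\delta$; meanwhile $\mathrm{dist}(p_i,f_i)=h_i$ and $\mathrm{dist}(q_i,f_i)\ge h_i-\delta$, since moving a point by $\delta$ changes its distance to a hyperplane by at most $\delta$. Consequently, as soon as $h_i>2\delta$ we get $\mathrm{dist}(q_i,f_i),\,\mathrm{dist}(p_i,f_i)>\delta\ge\mathrm{dist}(q_j,f_i)$ for every $j\neq i$, which is condition~2), and $q_i$ cannot have crossed $f_i$, giving condition~1). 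Hence, if every height of every $d$-simplex spanned by points of $S$ exceeds $2\delta$, then by Lemma~\ref{lem:proj} each $(d+1)$-tuple of $S$ and its rounded counterpart have the same orientation, so the chirotope is preserved.

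The probabilistic heart of the argument is to bound the probability of the bad event that some $d$-simplex has a height at most $2\delta$; equivalently, that some sample point lies within distance $2\delta$ of the hyperplane spanned by some other $d$ sample points. I would condition on a choice of $d$ points spanning a hyperplane $H$: the remaining points are then independent and uniform, and the slab $\{x:\mathrm{dist}(x,H)\le 2\delta\}$ has width $4\delta$, so its intersection with the unit ball has volume at most $4\delta$ times the largest $(d-1)$-dimensional cross-section of the ball, a constant. Thus a single remaining point falls in the slab with probability $O(\delta)$, \emph{uniformly} over the position of $H$. Taking a union bound over the $O(n^{d+1})$ ordered $(d+1)$-tuples (the choice of apex only contributes a factor $d+1$), the bad event has probability $O(n^{d+1}\delta)=O(n^{d+1}g)=O(1/n^{\epsilon})$. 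Therefore with probability $1-O(1/n^{\epsilon})$ all heights exceed $2\delta$ and the rounding preserves the chirotope.

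The step I expect to be the main obstacle is the conditioning in the probabilistic estimate: the hyperplane $H$ is itself a random object built from $d$ of the sample points, so one must avoid circular reasoning. The clean route is to condition on the $d$ spanning points first and exploit the fact that the slab--ball volume bound is uniform in $H$, which lets the remaining independent point be treated directly. A secondary point requiring care is verifying that the single condition $h_i>2\delta$ genuinely implies both conditions of Lemma~\ref{lem:proj}, with condition~2) being the binding one; this is where the observation that the $p_j$ with $j\neq i$ lie exactly on $f_i$ does the essential work.
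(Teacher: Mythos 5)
Your proposal is correct and follows essentially the same route as the paper: round to the grid, reduce both hypotheses of Lemma~\ref{lem:proj} to the single requirement that each simplex height exceed twice the rounding displacement, bound the probability of a small height by a slab--ball volume ratio (conditioning on the $d$ points spanning the hyperplane), and finish with a union bound over the $O(n^{d+1})$ tuples. The only differences are cosmetic: the paper computes the cross-section constant explicitly via $\vol(B_{d-1})/\vol(B_d)$ and Gamma functions, whereas you leave it as an $O(1)$ factor, which suffices.
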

\begin{proof}
Let $M:=n^{d+1+\epsilon}$.
Let $S'$ be the image of $S$ after rounding each point to its nearest neighbor on a grid of step size $1/M$.
Consider a $(d+1)$-tuple of points $P:=(p_1,\dots,p_{d+1})$ in $S$ and the corresponding 
$(d+1)$-tuple of rounded points $Q:=(q_1,\dots,q_{d+1})$ in $S'$. 
As in Lemma~\ref{lem:proj}, let $f_i$ be the hyperplane passing through the facet of $\conv(P)$ opposite to $p_i$.
We prove that the conditions of Lemma~\ref{lem:proj} hold with high probability. 
By definition, for any given $j$, the  absolute difference between $p_j$ and $q_j$ is at most $\sqrt{d}/M$.
Thus the conditions of Lemma~\ref{lem:proj} hold if the distance from $p_i$ to $f_i$ is at least $2\sqrt{d}/M$.
Let $B_d$ be the unit $d$-dimensional ball.
The $d-1$-volume of the intersection of $B_d$ and the hyperplane containing $f_i$ is at most $\vol(B_d-1)$.
Thus, the probability that for a given $1 \le i \le d+1$ the distance from $p_i$ to $f_i$ is less or equal to $2\sqrt{d}/M$ is at most  $(2\sqrt{d}/M)\vol (B_{d-1})/\vol (B_d)$. 
We have that
\begin{eqnarray*}
\frac{2\sqrt{d}}{M} \cdot \frac{\vol (B_{d-1})}{\vol (B_d)} & = & \frac{2\sqrt{d}}{\sqrt{\pi}M}\cdot \frac{\Gamma (\frac{d}{2}+1)}{\Gamma (\frac{d-1}{2}+1)} \\
& < & \frac{2\sqrt{d}}{\sqrt{\pi}M}\cdot \frac{\Gamma (\frac{d}{2}+1)}{\Gamma (\frac{d}{2})} \\
& = & \frac{d^{3/2}}{\sqrt{\pi}M}.
\end{eqnarray*}
We  apply the union bound over all such bad events. There are $(d+1)\binom{n}{d+1}$ such events to consider. 
Thus, the probability that no $(d+1)$-tuple has a different orientation as the corresponding $(d+1)$-tuple
in $S'$ is at least \[1-\binom{n}{d+1} \frac{(d+1)d^{3/2}}{\sqrt{\pi}M}=1-O\left(\frac{1}{n^{\epsilon}}\right ).\]
\end{proof}
Note that we considered the uniform distribution on the unit ball for convenience.
The same analysis holds for any fixed convex body, where the probability of a bad event happening depends on the
discrepancies in the distributions of the projections in different directions.

\bibliographystyle{plain}
\bibliography{resolutionRandom}

\end{document}